\newtheorem{lemma}{Lemma}
\def\tr{\operatorname{tr}}
\newcommand{\bra}[1]{\langle #1|}
\newcommand{\ket}[1]{|#1\rangle}
\newcommand{\op}[1]{|#1\rangle\!\langle#1|}
\newcommand{\norm}[1]{\Vert #1\Vert}
\newcommand{\abs}[1]{\vert #1\vert}
\newtheorem{theorem}{Theorem}
\DeclareOldFontCommand{\rm}{\normalfont\rmfamily}{\mathrm}
\begin{document}

\title{Quantum speed limits for information and coherence}

\author{Brij Mohan}\email{brijmohan@hri.res.in}

\affiliation{Harish-Chandra Research Institute,\\  A CI of Homi Bhabha National
Institute, Chhatnag Road, Jhunsi, Prayagraj 211019, India
}

\author{Siddhartha Das}\email{das.seed@gmail.com}
\affiliation{Centre for Quantum Information \& Communication (QuIC),  \'Ecole polytechnique de Bruxelles, Universit\'e libre de Bruxelles, Brussels, B-1050, Belgium}
\affiliation{Center for Security, Theory and Algorithmic Research (CSTAR), Centre for Quantum Science and Technology (CQST), International Institute of Information Technology, Hyderabad, Gachibowli, Telangana 500032, India}

\author{Arun Kumar Pati}\email{akpati@hri.res.in}
\affiliation{Harish-Chandra Research Institute,\\  A CI of Homi Bhabha National
Institute, Chhatnag Road, Jhunsi, Prayagraj 211019, India
}

\begin{abstract}
The quantum speed limit indicates the maximal evolution speed of the quantum system. In this work, we determine speed limits on the informational measures, namely the von Neumann entropy, maximal information, and coherence of quantum systems evolving under dynamical processes. These speed limits ascertain the fundamental limitations on the evolution time required by the quantum systems for the changes in their informational measures. Erasing of quantum information to reset the memory for future use is crucial for quantum computing devices. We use the speed limit on the maximal information to obtain the minimum time required to erase the information of quantum systems via some quantum processes of interest.
\end{abstract}

\maketitle
\section{Introduction}
In quantum information theory, the speed limit aims to address the question, ``What is the maximal rate of change of the state of a quantum system in a physical process". It provides the minimum time needed for a quantum system to evolve from a given initial state to the target state. Therefore, it sets the fundamental lower bound on the evolution time of the quantum system~\cite{Mandelstam1945, Margolus1998, Levitin2009, Anandan1990}. Determination of quantum speed limit (QSL) is important in the areas of quantum metrology~\cite{Campbell2018}, quantum control~\cite{Campbell2017}, quantum thermodynamics~\cite{Mukhopadhyay2018,Funo2019}, etc. For instances, it plays a crucial role in determining the minimum charging time of quantum batteries~\cite{F.Campaioli2018} and finding the minimum time required to implement quantum gates in quantum computation~\cite{AGN12}. Recently, the reverse quantum speed limit has been derived using the geometry of the quantum state space,
and its application in the quantum battery was also discussed~\cite{Mohan2021}. The study of quantum speed limits on informational measures is relevant for both fundamental and applied aspects, given the rapid progress in the area of quantum technologies and need to control the dynamics of quantum systems. 

Quantum superposition is one of the key features of the quantum theory and the amount of superposition present in a particular state is measured by quantum coherence. Quantum coherence is a crucial resource for several quantum information processing tasks~\cite{Giovannetti2011}, including quantum computation~\cite{Jeong2002,Ralph2003,Yoo2014,Srivastava2021} and quantum thermodynamics~\cite{Kammer2016, Singh2021}, where the energy eigenbasis forms a natural choice for the reference basis (e.g.,~\cite{Singh2021}). Quantification of resources such as the entanglement and the coherence are well-established approaches. It has been shown that the incoherent operations can convert any degree of coherence with respect to a reference basis into entanglement~\cite{Streltsov2015}. The notion of coherence naturally arises in the context of QSL, as shown in the Refs.~\cite{Marvian2016, D.Mondal2016,Rossatto2020}. So a natural question that arises is ``how fast the quantum coherence of the given quantum state can be generated or destroyed in a physical process". We have studied fundamental limits on the time required for the change in information content and quantum coherence which answers these questions. One of the consequences of our main results is that it sets speed limit for the quantum erasure operations, where the quantum states are reset to a fixed state.

 QSL was first obtained by Mandelstam and Tamm, which is based on the quantum fluctuations in energy~\cite{Mandelstam1945}. Later, another bound was obtained by Margolus and Levitin based on the mean energy~\cite{Margolus1998}. The maximum of these two bound is tight as shown in Ref.~\cite{Levitin2009}. If the quantum system evolves along the shortest geodesic, then the QSL is saturated and the kind of Hamiltonian that may drive the system along the geodesic was discussed in Ref.~\cite{Pati1993}. QSL has been widely studied for unitary dynamics~\cite{Mandelstam1945,Margolus1998,Levitin2009,Anandan1990,Gislason1956,Eberly1973,Bauer1978,Bhattacharyya1983,Leubner1985,Vaidman1992,Uhlmann1992,Uffink1993,Pfeifer1995,Horesh1998,Soderholm1999,Giovannetti2004,Andrecut2004,Gray2005,Luo2005,Batle2005,Borras2006,Zielinski2006,Zander2007,Andrews2007,Kupferman2008,Yurtsever2010,Fu2010,Jones2010,Chau2010,Zwierz2012,S.Deffner2013,Fung2013,Poggi2013,Fung2014,Andersson2014,D.Mondal2016,Mondal2016,S.Deffner2017,Campaioli2018}, open system dynamics~\cite{Deffner2013,Campo2013,Taddei2013,Pires2016,Jing2016,Pintos2021}, and arbitrary dynamics~\cite{S.Deffner2020}. Recently, resource speed limit~\cite{Camaioli2020} has been introduced, which describes how quickly quantum resources can be generated or degraded by physical processes. In  Ref.~\cite{Rudnicki20201}, QSL of quantum entanglement has been studied, using a geometric measure of entanglement. Therefore, the study of QSL for quantum state as well as various resources associated with a quantum system is of prime importance.

In this paper, we derive speed limits for the von Neumann entropy, maximal information, and quantum coherence for quantum systems 
undergoing arbitrary dynamical processes. These speed limits depict fundamental limitations on the minimal time required for the changes in the entropy, maximal information, and quantum coherence of quantum systems undergoing dynamical processes. Here, a dynamical process refers to a completely positive trace preserving (CPTP) map acting on a quantum system. Our result suggests a new bound on the rate of information production~\cite{S.Deffner2020}. As an application, we show that the speed limit on the maximal information provides the minimum time required to erase information of a quantum system in the context of Landauer's erasure~\cite{Landauer,Bennett,Goold2016}. Similarly, the speed 
limit on the information can also be applied for the thermalization process. We discuss speed limits on information and coherence and illustrate their applications by presenting some examples of dynamical processes of interest. Thus, we believe that our results will have
applications in quantum computing, quantum communication, quantum control, and quantum thermodynamics.

Our work is organised as follows. In Section~\ref{sec:prem}, we discuss the preliminaries and background required to arrive at the main results of this paper, which are presented in Section~\ref{sec:QSL}. In Section~\ref{sec:qsl-ic}, we obtain the speed limit on entropy and information. We discuss limitations on the minimal time required for erasing processes based on the speed limits on the information. In Section~\ref{sec:coherence-change}, we derive the speed limit on (basis dependent) quantum coherence. In Section~\ref{AQSLI}, we also discuss the speed limit bounds based on instantaneous evolution speed for informational measures. Finally, in the last section, we provide the conclusions.

 \section{Preliminaries}\label{sec:prem}
 Let $\cal{H}$ denote the separable Hilbert space, where $\dim(\mathcal{H})$ may be finite or infinite. The state of a quantum system is represented by a density operator $\rho$, which satisfies following properties: $\rho=\rho^{\dagger}$, $\rho\geq 0$, and $\tr[\rho]=1$. The identity operator is denoted by $\mathbbm{1}$. The physical transformation of the state of a system is given by a completely positive, trace-preserving map, which is also called quantum channel. Time-evolution of a quantum system evolving under a given dynamical process is given by the master equation 
 \begin{equation}
      \dot{\rho_t}:=\diff{\rho_t}{t}=\mathcal{L}_t(\rho_t),
 \end{equation}
 where $\rho_t$ is the state of the system at time $t=t$ and $\mathcal{L}_t$ is the Liouvillian super-operator~\cite{Rivas2012}, which in general can be time dependent or time independent.
 
 The von Neumann entropy $S(\rho)$ of a quantum system in the state $\rho$ is defined as \begin{equation}\label{eq:entropy}
  S(\rho):= -\tr\{\rho \ln \rho\}.\end{equation}
The entropy $S(\rho)$ can be interpreted as the average information content of a quantum system in the state $\rho$, which can be defined on finite- or infinite-dimensional Hilbert space. The maximal information $I(\rho)$ of a finite-dimensional quantum system with the Hilbert space of dimension $d$ and in the state $\rho$ is defined as~\cite{Zurek1983}:
\begin{equation}\label{eq:information}
    I(\rho):= \ln(d) - S(\rho).
\end{equation}
We call $I(\rho)$ as the maximal information as it can be interpreted as the maximum amount of information that can be gained by performing optimal measurements on the quantum system~\cite{Zurek1983}. It also provides a measure of objective information for a quantum state $\rho$~\cite{Open2003}.
  
  We now recall Theorem 1 of Ref.~\cite{Das2018} below which we use in this work. 
  \begin{lemma}[\cite{Das2018}]\label{lem:ent-change}
 For any quantum dynamical process with $\dim(\mathcal{H})<+\infty$, the rate of entropy change is given by 
 \begin{equation}
   \frac{d}{dt} S(\rho_t)= -{\rm tr}\{\dot{\rho_t}  \ln \rho_t\},
\end{equation}
  whenever $\dot{\rho}_{t}$ is well-defined. The above formula also holds when $\dim(\mathcal{H})=+\infty$ given that $\dot{\rho}_{t}\ln \rho_{t}$ is trace-class and the sum of the time derivative of the eigenvalues of $\rho_{t}$ is uniformly convergent
on some neighborhood of $t$, however small.
  \end{lemma}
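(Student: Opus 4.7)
The plan is to reduce the problem to an elementary computation about the eigenvalues of $\rho_t$, using its (time-dependent) spectral decomposition, and then to rewrite the resulting sum in a basis-free way as the trace on the right-hand side. First, I would diagonalize $\rho_t = \sum_i \lambda_i(t)\, \op{e_i(t)}$ and write $S(\rho_t) = -\sum_i \lambda_i(t) \ln \lambda_i(t)$. Differentiating termwise gives $\frac{d}{dt} S(\rho_t) = -\sum_i \dot\lambda_i(t) \ln \lambda_i(t) - \sum_i \dot\lambda_i(t)$, and because $\tr(\rho_t) = 1$ the second sum vanishes, leaving $\frac{d}{dt} S(\rho_t) = -\sum_i \dot\lambda_i(t) \ln \lambda_i(t)$.

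Next, I would show $\langle e_i(t) | \dot\rho_t | e_i(t) \rangle = \dot\lambda_i(t)$. Differentiate the eigenvalue equation $\rho_t \ket{e_i(t)} = \lambda_i(t) \ket{e_i(t)}$ to obtain $\dot\rho_t \ket{e_i} + \rho_t \ket{\dot e_i} = \dot\lambda_i \ket{e_i} + \lambda_i \ket{\dot e_i}$, and take the inner product with $\bra{e_i(t)}$. Using $\bra{e_i}\rho_t = \lambda_i \bra{e_i}$, the two $\lambda_i \braket{e_i}{\dot e_i}$ terms cancel and the desired identity drops out. Combining this with the spectral expansion $\ln \rho_t = \sum_i (\ln \lambda_i(t))\op{e_i(t)}$ gives $\tr\{\dot\rho_t \ln \rho_t\} = \sum_i \dot\lambda_i(t) \ln \lambda_i(t)$, which is exactly $-\frac{d}{dt} S(\rho_t)$.

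The main obstacle, and the reason for the technical hypotheses in the $\dim(\mathcal{H}) = +\infty$ case, is justifying two interchanges: termwise differentiation of the infinite series $\sum_i \lambda_i(t)\ln\lambda_i(t)$, and the expansion $\tr\{\dot\rho_t \ln \rho_t\} = \sum_i \dot\lambda_i \ln \lambda_i$. The uniform convergence of $\sum_i \dot\lambda_i(t)$ on a neighborhood of $t$ supplies the first (together with the fact that $x\ln x$ is bounded on $[0,1]$ when paired with an absolutely summable family), and the trace-class assumption on $\dot\rho_t \ln \rho_t$ ensures the trace identity makes sense and can be evaluated in the eigenbasis of $\rho_t$. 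A subtlety worth flagging is the handling of zero eigenvalues, where $\lambda_i \ln \lambda_i$ is defined by the continuous extension $0 \ln 0 = 0$, and the corresponding terms $\dot\lambda_i \ln \lambda_i$ must be either shown to vanish or excluded by the convergence hypothesis. Once these analytic points are dispatched, the algebraic core of the argument is short.
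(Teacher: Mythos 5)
Your argument is correct and is essentially the proof given in the cited source (the paper itself only recalls this lemma from Ref.~\cite{Das2018} without reproving it): spectral decomposition of $\rho_t$, termwise differentiation of $-\sum_i\lambda_i\ln\lambda_i$ using $\sum_i\dot\lambda_i=0$, the Hellmann--Feynman-type identity $\langle e_i|\dot\rho_t|e_i\rangle=\dot\lambda_i$, and evaluation of $\tr\{\dot\rho_t\ln\rho_t\}$ in the eigenbasis, with the stated trace-class and uniform-convergence hypotheses doing exactly the work you assign them in the infinite-dimensional case. Your flagging of the zero-eigenvalue and degenerate-eigenvalue subtleties is appropriate and consistent with how the reference handles them.
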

   
 The operator norm $\norm{A}_{\rm op}$, the Hilbert-Schmidt norm $\norm{A}_{\rm HS}$, and the trace norm $\norm{A}_{\rm tr}$ of an operator $A$ are defined as $\norm{A}_{\rm op}:=\lambda_{\max}$ where $\lambda_{\max}$ is the maximum of the absolute value of eigenvalues of $A$ when $A$ is Hermitian, $\norm{A}_{\rm HS}:=\sqrt{{\rm tr}(A^{\dagger}A)}$, and $\norm{A}_{\rm tr}={{\rm tr}(\sqrt{A^{\dagger}A}})$, respectively.

We recall standard definitions relevant in the context of coherence from Ref.~\cite{Baumgratz2014}, where a resource theoretical framework for quantifying the quantum coherence was proposed. Let us consider a quantum system associated with a finite-dimensional Hilbert space $\cal{H}^d$ such that $\dim(\cal{H}^d)=d$. Let $\{|i\rangle\}_{i=0}^{d-1}$ be the reference eigenbasis of $\mathcal{H}^d$. The state of the quantum system described by density operator $\rho$ $\in$ $\cal{D}(\cal{H}^d)$, where $\cal{D}(\cal{H}^d)$ is convex set of density operators. The subset  of incoherent states $\cal{I}$ $\subset$ $\cal{D}(\cal{H}^d)$ consist of the family of diagonal density matrices $\omega$ = $\sum_{i}p_{i}\op{i}$ (with  $0  \leq p_{i} \leq 1$ and $\sum_{i}p_{i}=1$) in the reference eigenbasis $\{|i\rangle\}_{i=0}^{d-1}$. In brief, any reasonable measure of the quantum coherence $C(\rho)$ must fulfill the following conditions as introduced in Refs.~\cite{Baumgratz2014,Streltsov2017}: ({\it i}) must be  non-negative real number for all state $\rho$, with $C(\rho)$ = 0 iff $\rho$ $\in $ $\cal{I}$; ({\it ii}) do not increase under the mixing of quantum states (convexity) $C(\sum_{k}p_{k}\rho_{k})$ $\le$ $\sum_{k}p_{k}C(\rho_{k})$, with $\rho_{k}$ $\in$ $\cal{D(H)}$, 0  $\le$ $p_{k}$  $\le$ 1 and $\sum_{k}p_{k}$ = 1; ({\it iii}) must be monotonic under incoherent completely positive and trace-preserving (ICPTP) map (i.e., incoherent quantum channel), $C(\Phi_{ICPTP}(\rho))$ $\leq$ $C(\rho)$ for all $\Phi_{ICPTP}$ maps; ({\it iv}) must be monotonic under selective measurements i.e. $C(\rho)$ $\ge$ $\sum_{k}p_{k}C(\rho_{k})$, where $\rho_{k}=E_{k}\rho E_{k}^{\dagger}/p_{k}$ (with $p_{k}$ = ${\rm tr}(E_{k}\rho E_{k}^{\dagger})$) are the states after measurement for arbitrary Kraus operators $\{E_{k}\}$, which obey the condition $\sum_{k}E_{k}^{\dagger}E_{k}$ = $\mathbbm{1}$ and  $E_{k}\cal{I}E_{k}^{\dagger}$ $\subset$ $\cal{I}$.

There are several widely known (basis dependent) quantum coherence measures such as the relative entropy of coherence~\cite{Baumgratz2014}, the $l_{1}$ norm of coherence~\cite{Baumgratz2014}, the geometric coherence~\cite{Streltsov2015}, and the robustness of coherence~\cite{Napoli2016}, etc.
We are using the relative entropy of coherence because of its operational meaning as the distillable coherence~\cite{winter2016,Singh2017}. In addition, it is also easier to work and compute compared to some other measures of coherence. For a given state $\rho$, the relative entropy of coherence defined as
\begin{equation}\label{eq:coherence}
     C(\rho):= S(\rho^{\rm D})-S(\rho),
\end{equation}
where $\rho^{\rm D}:=\sum_{i}\langle i|\rho|i \rangle \op{i}$ is the density operator that is diagonal in the reference basis, obtained by dephasing off-diagonal elements of $\rho$. The reference basis is fixed and independent of time. The von Neumann entropy $S(\rho^{\rm D})$ for the diagonal density operator $\rho^{\rm D}$ reduces to the Shannon entropy $H(\{p_i\}):=-\sum_ip_i\ln p_i$, where the probability distribution $\{p_i\}$ is given by $p_i:= \bra{i} \!\rho\!\ket{i}$. We call a quantum channel, i.e., a CPTP map, to be completely dephasing with respect to the reference basis $\{\ket{i}\}$ if it acts on a density operator $\rho$ and yields $\rho^{\rm D}=\sum_{i}\langle i|\rho|i \rangle \op{i}$. 

Apart from the basis dependent notion of coherence measure~\eqref{eq:coherence}, in Ref.~\cite{Ma2019}, the maximal information $I(\rho)$~\eqref{eq:information} is argued to be basis independent coherence measure with reference to only incoherent state $\frac{\mathbbm{1}}{d}$.

\section{Quantum Speed Limit}\label{sec:QSL}
In general, quantum speed limits illustrate fundamental limitations on the evolution of quantum systems due to given quantum dynamics. In this section, we present the speed limits on the entropy $S(\rho)$, maximal information $I(\rho)$, and quantum coherence $C(\rho)$. 

Entropy is a widely studied fundamental quantity in quantum information theory. Entropy of a state can be interpreted as an average (expected) information content of the given quantum state. The joint evolution of the system and environment is considered to be a unitary operation in quantum theory. It is known that the unitary process keeps entropy invariant. However, the local evolution of the system alone can be non-unitary, i.e., some noisy physical process. This non-unitarity process causes a flow of information between the system and the environment, which can change the entropy of the system. The rate of the entropy change is also related to the rate of the change for some entanglement measures of the system (e.g., see Section III of Ref.~\cite{Das2018}). These aspects of the entropy motivate us to define lower bounds on the minimal time require for the entropy change, i.e., speed limits on the entropy.

 Quantum coherence is a fundamental non-classical property of quantum systems, which act as a resource for several quantum processing tasks (e.g.,~\cite{Giovannetti2011,Singh2021}). So the natural question arises ``how fast can the quantum coherence in the given quantum state be generated, destroyed or erased by some physical process". To answer this question, the resource speed limit was introduced in Ref.~\cite{Camaioli2020},  which is also applicable for quantum coherence. There the resource speed limit is defined using the divergence-based measures. The bound obtained in Ref.~\cite{Camaioli2020} is challenging to calculate in general as it requires optimization over all free states. Here, we have obtained QSL on coherence using the relative entropy of coherence, which is arguably easier to calculate. We also derive QSL on the maximal information $I(\rho)$ which can also be interpreted as a basis independent measure of coherence~\cite{Ma2019} and a measure of objective information of given quantum state~\cite{Open2003}.
 
  Our method of obtaining speed limits is similar to the technique used in Refs.~\cite{Deffner2013,Cai2017}. We briefly discuss concerns and other method based on Ref.~\cite{Connor2021} to derive other speed limits in Section~\ref{AQSLI}.

\subsection{Quantum speed limit for information}\label{sec:qsl-ic}
We now discuss the first main result of this work that provides a lower bound on evolution time of entropy, $T \geq T_{ESL}$ using the 
Hilbert Schmidt norm. 
The second main theorem of this work provides a lower bound on evolution time of the entropy or equivalently for the information, $T\geq T_{ISL}$ using the operator norm.

\begin{theorem}\label{thm:1}
 For an arbitrary quantum dynamics describable as time-evolution, the minimum time needed for the state $\rho_{t}$ to attain entropy $S(\rho_{T})$, where $\rho_{T}:=\rho_{t=T}$, starting with the initial entropy $S(\rho_{0})$, where $\rho_{0}:=\rho_{t=0}$, is lower bounded by
\begin{equation}\label{eq:theorem-1}
T \geq  T_{ESL}=  \frac{\abs{S(\rho_T) -S(\rho_0)}}{ {\Lambda}^{\rm rms}_{T}\overline{\norm{{ \ln \rho_t}}^{2}_{\rm HS}}},
\end{equation}
where ${\Lambda}^{\rm rms}_T:=\sqrt{\frac{1}{T}\int_{0}^{T}  \norm{\mathcal{L}_{t}({\rho_t})}^{2}_{\rm HS}{\rm d}t}$ is the root mean square evolution speed of the quantum system, $\overline{\norm{{\ln \rho_t}}^{2}_{\rm HS}}:=\sqrt{\frac{1}{T}\int_{0}^{T}  \norm{{ \ln \rho_t}}^{2}_{\rm HS} {\rm d}t}$, and $\cal{L}_{t}$ is the Liouvillian super-operator.
This theorem holds for both finite-dimensional and infinite-dimensional systems~\cite{Das2018}.
\end{theorem}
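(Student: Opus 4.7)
The plan is to start from Lemma~\ref{lem:ent-change}, which expresses the instantaneous rate of entropy change as $\dot{S}(\rho_t)=-\tr\{\dot{\rho}_t\ln\rho_t\}=-\tr\{\mathcal{L}_t(\rho_t)\ln\rho_t\}$, and then to control this rate by two successive applications of the Cauchy--Schwarz inequality before integrating.

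First I would integrate the rate formula from $0$ to $T$ and take absolute values to obtain
\begin{equation}
\abs{S(\rho_T)-S(\rho_0)} \leq \int_0^T \abs{\tr\{\mathcal{L}_t(\rho_t)\ln\rho_t\}}\, \mathrm{d}t .
\end{equation}
Since both $\mathcal{L}_t(\rho_t)$ and $\ln\rho_t$ are Hermitian (on the support of $\rho_t$), the integrand is a Hilbert--Schmidt inner product, so the Cauchy--Schwarz inequality for the trace inner product gives $\abs{\tr\{\mathcal{L}_t(\rho_t)\ln\rho_t\}} \leq \norm{\mathcal{L}_t(\rho_t)}_{\rm HS}\,\norm{\ln\rho_t}_{\rm HS}$. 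This is the first key step: it turns the trace inside the integral into a product of the instantaneous evolution speed and the ``size'' of $\ln\rho_t$ in Hilbert--Schmidt norm.

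Next I would apply the Cauchy--Schwarz inequality to the resulting time integral, bounding $\int_0^T fg\,\mathrm{d}t \leq \sqrt{\int_0^T f^2\,\mathrm{d}t}\sqrt{\int_0^T g^2\,\mathrm{d}t}$ with $f=\norm{\mathcal{L}_t(\rho_t)}_{\rm HS}$ and $g=\norm{\ln\rho_t}_{\rm HS}$. Pulling out a factor of $T$ from each square root reproduces the definitions $\Lambda^{\rm rms}_T$ and $\overline{\norm{\ln\rho_t}^2_{\rm HS}}$, yielding
\begin{equation}
\abs{S(\rho_T)-S(\rho_0)} \leq T\cdot \Lambda^{\rm rms}_T \cdot \overline{\norm{\ln\rho_t}^2_{\rm HS}} ,
\end{equation}
which, upon rearrangement, is precisely the bound~\eqref{eq:theorem-1}.

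The main conceptual obstacle is not the calculation itself, which is short, but ensuring that all steps are well-defined for the setting claimed. In finite dimensions, everything works provided $\rho_t$ has full support so that $\ln\rho_t$ is bounded; if $\rho_t$ is rank-deficient, one must argue (via functional calculus restricted to the support of $\rho_t$, or by a continuity/regularization argument) that the Cauchy--Schwarz estimate still yields a finite bound or, if it diverges, a vacuous but valid inequality. In infinite dimensions, I would invoke the regularity hypotheses already stated in Lemma~\ref{lem:ent-change} (trace-class property of $\dot{\rho}_t\ln\rho_t$ and uniform convergence of the derivative of the spectrum) together with the finiteness of the two Hilbert--Schmidt norms on the relevant time interval, so that both Cauchy--Schwarz steps are justified. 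Once these integrability conditions hold, the chain of inequalities goes through verbatim.
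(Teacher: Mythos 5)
Your proposal is correct and follows essentially the same route as the paper's own proof: the rate formula from Lemma~\ref{lem:ent-change}, a Cauchy--Schwarz bound on the trace inner product, integration over $[0,T]$, and a second Cauchy--Schwarz step to produce $\Lambda^{\rm rms}_T$ and $\overline{\norm{\ln\rho_t}^2_{\rm HS}}$. Your closing remarks on the support of $\rho_t$ and the infinite-dimensional regularity hypotheses are a welcome addition of care that the paper handles only implicitly by citing the conditions of Lemma~\ref{lem:ent-change}.
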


\begin{proof}
 The entropy of time evolved state $\rho_{t}$ given by
\begin{equation}
   S(\rho_t)= -{\rm tr}\{\rho_t  \ln \rho_t\}.
\end{equation}
After differentiating the above equation with respect to time $t$, we obtain~\cite{Das2018}
\begin{equation}
   \frac{{\rm d}}{{\rm d}t} S(\rho_t)= -{\rm tr}\{\dot{\rho_t}  \ln \rho_t\}=-{\rm tr}\{\mathcal{L}_{t}({\rho_t})  \ln \rho_t\}.
\end{equation}
Let us now consider the absolute value of the above equation and apply the Cauchy--Schwarz inequality $|{\rm tr}(AB)|\leq\sqrt{{\rm tr}(A^{\dagger}A){\rm tr}(B^{\dagger}B)}$. We then obtain the following inequality
\begin{equation}\label{eq:ent-bound}
   \left|\frac{{\rm d} }{{\rm d}t}S(\rho_t)\right|= |{\rm tr}\{ \mathcal{L}_{t}({\rho_t}) \ln \rho_t\}| \leq \norm{\mathcal{L}_{t}({\rho_t})}_{\rm HS} \norm{ \ln \rho_t}_{\rm HS}.
\end{equation}
The above inequality~\eqref{eq:ent-bound} is the upper bound on that the rate of change of the entropy of the quantum system evolving under given dynamics. 
After integrating above equation with respect to time $t$, we obtain 
\begin{equation}\label{MER}
  \int_{0}^{T} {\rm d}t\left|\frac{{\rm d} }{{\rm d}t}S(\rho_t)\right| \leq \int_{0}^{T} \norm{\mathcal{L}_{t}({\rho_t})}_{\rm HS} \norm{ \ln \rho_t}_{\rm HS}   {\rm d}t.
\end{equation}
Now applying the Cauchy--Schwarz inequality on the right hand side of the above inequality, we get  
\begin{equation}
  \int_{0}^{T} {\rm d}t\left|\frac{{\rm d} }{{\rm d}t}S(\rho_t)\right| \leq \sqrt{\int_{0}^{T}  \norm{\mathcal{L}_{t}({\rho_t})}^{2}_{\rm HS}{\rm d}t} \sqrt{\int_{0}^{T}  \norm{ \ln \rho_t}^{2}_{\rm HS} {\rm d}t}.
\end{equation}
From the above inequality, we get the desired bound:
\begin{equation}
  T\geq  \frac{|S(\rho_T) -S(\rho_0)|}{ {\Lambda}^{\rm rms}_{T}\overline{\norm{{ \ln \rho_t}}^{2}_{\rm HS}}}.
\end{equation}
\end{proof}
The lower bound of $T_{ESL}$ in \eqref{eq:theorem-1} is positive if the entropy of the system changes under the dynamical process and is zero if there is no change in the entropy. The change in entropy is zero when either the dynamics of the quantum system is unitary~\cite{Das2018,Nayak2007,Rivas2012} or the initial state is the fixed point of the governing dynamics. The $T_{ESL}$ is zero when the state is not evolving under the dynamics, i.e., when it is a fixed point of the governing dynamics.

The minimal time $T_{ESL}$ would be zero when entropy change is zero even if the state undergoes non-trivial transformation under dynamics, i.e., there is a change in the state during the process.

We now derive the second main result of this work that provides a lower bound on evolution time of information or basis-independent coherence $T\geq T_{ISL}$.

\begin{theorem}\label{thm:2}
 For an arbitrary quantum dynamics of a finite-dimensional system describable as time-evolution of its state, the minimum time needed for the state $\rho_{t}$ to attain information $I(\rho_{T})$, where $\rho_{T}:=\rho_{t=T}$, starting with the initial information $I(\rho_{0})$, where $\rho_{0}:=\rho_{t=0}$, is lower bounded by
\begin{equation}\label{eq:info-t}
T \geq  T_{ISL}=  \frac{\abs{I(\rho_T) - I(\rho_0)}}{ {\Lambda}^{\rm rms}_{T}\overline{\norm{{\ln \rho_t}}^{2}_{\rm op}}}, 
\end{equation}
where ${\Lambda}^{\rm rms}_T:=\sqrt{\frac{1}{T}\int_{0}^{T}  \norm{\cal{L}_{t}({\rho_t})}^{2}_{\rm tr}{\rm d}t}$ is the root  mean  square  evolution  speed  of the  quantum  system,  $\overline{\norm{ \ln \rho_t}^{2}_{\rm op}}:=\sqrt{\frac{1}{T}\int_{0}^{T}  \norm{{\ln \rho_t}}^{2}_{\rm op} {\rm d}t}$, and $\cal{L}_{t}$ is the Liouvillian super-operator.
\end{theorem}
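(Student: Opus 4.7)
The plan is to mirror the proof of Theorem~\ref{thm:1}, but replace the Hilbert--Schmidt Cauchy--Schwarz step with the non-commutative H\"older inequality that pairs the trace norm against the operator norm. The starting point is again Lemma~\ref{lem:ent-change}, which gives
\begin{equation}
\frac{d}{dt} S(\rho_t) = -\tr\{\mathcal{L}_t(\rho_t)\,\ln\rho_t\}.
\end{equation}
Since $I(\rho_t) = \ln d - S(\rho_t)$, we have $|I(\rho_T)-I(\rho_0)| = |S(\rho_T)-S(\rho_0)|$, so it suffices to bound the entropy change.

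The first key step is to take absolute values and apply the H\"older-type inequality $|\tr(AB)|\leq \norm{A}_{\rm tr}\,\norm{B}_{\rm op}$ with $A=\mathcal{L}_t(\rho_t)$ (Hermitian, traceless) and $B=\ln\rho_t$, obtaining the pointwise bound
\begin{equation}
\left|\frac{d}{dt} S(\rho_t)\right| \leq \norm{\mathcal{L}_t(\rho_t)}_{\rm tr}\,\norm{\ln\rho_t}_{\rm op}.
\end{equation}
The second step is to integrate from $0$ to $T$ and pull the absolute value inside, yielding $|S(\rho_T)-S(\rho_0)| \leq \int_0^T \norm{\mathcal{L}_t(\rho_t)}_{\rm tr}\,\norm{\ln\rho_t}_{\rm op}\,dt$. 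The third step is to apply the ordinary Cauchy--Schwarz inequality to the time integral, separating the Liouvillian factor from the logarithm factor, giving
\begin{equation}
|S(\rho_T)-S(\rho_0)| \leq T\, \Lambda^{\rm rms}_T\, \overline{\norm{\ln\rho_t}^{2}_{\rm op}},
\end{equation}
with $\Lambda^{\rm rms}_T$ and $\overline{\norm{\ln\rho_t}^2_{\rm op}}$ as defined in the statement. Rearranging and substituting $|S(\rho_T)-S(\rho_0)|=|I(\rho_T)-I(\rho_0)|$ delivers the claimed lower bound~\eqref{eq:info-t}.

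The only real subtlety, as opposed to the Theorem~\ref{thm:1} calculation, is the choice of the norm pairing: Cauchy--Schwarz in the Hilbert--Schmidt inner product is the self-dual $(2,2)$ case of H\"older, while here I use the dual pair $(1,\infty)$, which is natural because $\mathcal{L}_t(\rho_t)$ is a trace-class operator and $\ln\rho_t$ is most cleanly controlled by its largest eigenvalue in absolute value. I do not anticipate a genuine obstacle; the content of the theorem is essentially the choice of this tighter norm pairing, and the remaining manipulations are the same integration and Cauchy--Schwarz steps as before. One minor point worth recording in the write-up is that the bound is meaningful only when $\ln\rho_t$ is well-defined and bounded on $[0,T]$, i.e.\ when $\rho_t$ has full support (otherwise $\norm{\ln\rho_t}_{\rm op}$ diverges and the bound becomes vacuous), which is consistent with the finite-dimensional hypothesis in the statement.
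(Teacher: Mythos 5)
Your proposal is correct and follows essentially the same route as the paper's own proof in Appendix~\ref{QSLISL}: differentiate the entropy via Lemma~\ref{lem:ent-change}, apply the H\"older pairing $|\tr(AB)|\le\norm{A}_{\rm tr}\norm{B}_{\rm op}$, integrate, and then apply Cauchy--Schwarz in time to produce the root-mean-square factors. Your closing remark about $\norm{\ln\rho_t}_{\rm op}$ diverging when $\rho_t$ loses full rank is a sensible caveat, but it does not change the argument.
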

\begin{proof}
We adapt the proof arguments of Theorem~\ref{thm:1} to arrive at the above bound. Detailed proof is provided in Appendix~\ref{QSLISL}, which somewhat differs from the proof of Theorem~\ref{thm:1}.

\end{proof}
Note that $\abs{I(\rho_T) - I(\rho_0)}=\abs{S(\rho_0)-S(\rho_T)}$ for any finite-dimensional system evolving under the dynamics that preserves its dimension. Therefore, the inequality~\eqref{eq:info-t} also provides speed limit on entropy change using the operator norm. The entropy does not change when either the dynamics is unitary~\cite{Das2018,Nayak2007,Rivas2012} or the initial state is the fixed point of the governing dynamics. Then the change in maximal information is zero, and $T_{ISL}$ is also zero. However, the minimal evolution time for state evolution is non-zero in the unitary process.

The above two results will have important applications in quantum computing and quantum communication where 
 change in the entropy is inevitable. Our results show that quantum theory imposes limits on the rate of change of the entropy and information. The maximal rate of information production has interesting applications ranging from Black hole to quantum communication. For example, the Bekenstein bound~\cite{Bakenstein1973,Bakenstein1981} says the rate with which information can be retrieved from black
 hole. Our results show that the rate of information (entropy) gain~\cite{Holevo2010,Buscemi2016} in a quantum system obeys nontrivial bound for general physical processes.
 
We will show that the maximum rate of the information change satisfies a new bound.
The inequality~\eqref{eqs} can be expressed as
\begin{align}
\mathcal{\dot{I}}=\overline{\left|\frac{{\rm d} I(\rho_t)}{{\rm d}t}\right|}  \leq {{\Lambda}^{rms}_T  \overline{\norm{ \ln \rho_t}_{\rm op}^{2}}},
\end{align}
 where $\overline{\left|\frac{{\rm d} I(\rho_t)}{{\rm d}t}\right|}:= \frac{1}{T} \int_{0}^{T}{\rm d}t \left|\frac{{\rm d} I(\rho_t)}{{\rm d}t}\right|$.
 The above equation provides the upper bound on that the rate of change of the information change of the quantum system for arbitrary quantum dynamics~\cite{S.Deffner2020}.
 
 {\it Minimum time for Landauer's erasure}.---The process of resetting an input bit to a fixed bit value is one of the most elementary operation in classical as well as quantum computing. Resetting a bit has a thermodynamic cost. 
 The Landauer erasure principle says that to erase a single qubit we must spent $k_{B}T\ln2$ amount of energy or equivalently,
 we have to spent some amount of entropy~\cite{Landauer,Bennett,Goold2016}.
 In recent years, there have been a great surge of interest in trying to improve and generalise the Landauer principle for information erasure~\cite{Reeb2014,Goold2015,Lorenzo2015,Scandi2019,Abiuso2020,Timpanaro2020}.  Notably, a universal bound on the energy cost of resetting operation has been proved for finite time~\cite{Miller2020,Zhen2021}. Once we know that the bit reset happens in finite time, it is natural to ask is there any non-trivial bound on the speed of erasing an input bit. 
 The results proved in Theorem~\ref{thm:1} and Theorem~\ref{thm:2} can answer an important question: How fast one can erase information in a physical system? 
 
 Suppose, we have an arbitrary qubit initially prepared in the state $\ket{\psi}$. The erasure operation will transform  $\ket{\psi} \rightarrow \ket{\Sigma}$, where $\ket{\Sigma}$  is a fixed state (often called a blank state). This
transformation cannot be realised by reversible operation and
hence involves some energy cost as put forth by Landauer. Now,
let us consider a general state of the system at time t = 0
which is given by $\rho_0$ and the input state has information $I(\rho_0) = \ln d - S(\rho_0)$, where $\rho_0$ acts on the Hilbert space
of dimension $d$. Under the action of the erasure operation,
which is a CPTP map (quantum channel), any quantum state
gets mapped to a fixed state i.e., $\rho_0 \rightarrow \op{\Sigma}$. For example, the fixed state can be 
the ground state of a two-level system which is a pure state with the
lowest energy. Since the final state is a fixed
pure state and the effective dimension of the final state is one
(as the fixed state lives in a one-dimensional subspace), we
have $I(\rho_T) = 0$. Therefore, the bound~\eqref{eq:info-t} can be used to
state that the minimum time required to erase information is
given by
 \begin{equation}
     T_{Erasure}= \frac{I(\rho_{0})}{{\Lambda}^{\rm rms}_{T}\overline{\norm{{\ln \rho_t}}^{2}_{\rm op}}}.
 \end{equation}
 Thus, while Landauer's erasure principle says how much minimum energy one needs to spend to erase a single bit or qubit, our result answers
 the question on what is the minimum time needed to erase a single bit or qubit. In the example below, we use the above bound to answer the question how fast erasure happens in thermalisation process.
 
 {\it Erasing information via thermalization}.--- Let us consider a two-level atom with ground state $\op{1}$  and excited state $\op{0}$, which is weakly interacting with a heat (thermal) bath at a fixed temperature. This interaction of the atom with the heat bath would transform the state of the atom to a thermal state. Depending on the Hamiltonian of the atom, the larger the gap between two energy levels, closer the thermal state is with the ground state. This is the idea behind erasing of the information in the system via thermalization~\cite{Lubkin1987,Plenio2001}. The jump operators of the heat bath for a two-level system are given as $L_{-} = \sqrt{\gamma_{0}(N+1)}\sigma_{-} $ and  $L_{+} = \sqrt{\gamma_{0}N}\sigma_{+}$, where $\sigma_{-}= |1\rangle\langle0|$ and  $\sigma_{+}=|0\rangle\langle1|$ are the lowering and raising operators to the system, $N$ is the mean number of photons in the resonant bath, $\gamma_{0}$ denotes the spontaneous emission rate of the bath, and $\gamma=\gamma_{0}(2N+1)$ is the total emission rate. The Lindblad master equation~\cite{Lindbland1976} governs the time evolution of atom and it is given by
\begin{align}
\frac{\partial \rho_{t}}{\partial t}& = {\gamma_{0}(N+1)}(\sigma_{-}\rho_t\sigma_{+}-\frac{1}{2}\{\sigma_{+}\sigma_{-},\rho_t\} \nonumber\\
 &+{\gamma_{0}N}(\sigma_{+}\rho_t\sigma_{-}-\frac{1}{2}\{\sigma_{-}\sigma_{+},\rho_t\}).
\end{align}
 If the atom the initially in a state  $\rho_{0}=\op{\psi(0)}$, where $|\psi(0)\rangle = {\cos\frac{\theta}{2}}|0\rangle +  \sin{\frac{\theta}{2}}|1\rangle$, then solution of the Lindbland equation is given by~\cite{Cherian2019}
\begin{align}\label{rho:ther}
\rho_t=&\frac{1}{2} \left(1-\frac{\gamma_{0}}{\gamma }+e^{-\gamma  t} \left(\frac{\gamma_{0}}{\gamma }+\cos \theta \right)\right) |0\rangle \langle 0 | \nonumber\\
&+\frac{1}{2} \left(1+\frac{\gamma_{0}}{\gamma }-e^{-\gamma  t} \left(\frac{\gamma_{0}}{\gamma }+\cos \theta \right)\right)|1\rangle \langle 1|\nonumber\\
&+ \frac{1}{2}e^{-\gamma t} \sin{\theta}  (|1\rangle \langle 0|+ |0\rangle \langle 1|).
\end{align}
To estimate bound~\eqref{eq:info-t}, we require the following quantities:
 \begin{equation}
     I(\rho_{0})= \ln{2},
 \end{equation}
\begin{align}
 I(\rho_{t})=& \ln{2} 
   +\frac{ \left(\gamma  -e^{-\gamma t}\sqrt{\delta}\right)}{2 \gamma }
   \ln{[\frac{ \left(\gamma -e^{-\gamma t}\sqrt{\delta}\right)}{2 \gamma }]}\nonumber\\
    &+\frac{ \left(\gamma  +e^{-\gamma t}\sqrt{\delta}\right)}{2 \gamma }
   \ln{[\frac{\left(\gamma +e^{-\gamma t}\sqrt{\delta}\right)}{2 \gamma }]},
\end{align}
\begin{align}
&\norm{ \cal{L}(\rho)}_{\tr}=  \frac{1}{2}\gamma_{0} e^{- \gamma  t}\times\nonumber\\
 &\sqrt{ (2 N+1)^2 \sin ^2\theta +\frac{4 \left(\gamma_{0}+\gamma  (2 N+1) \cos \theta +2 \gamma_{0} N\right)^2}{\gamma ^2}},
 \end{align}
  \begin{equation}
\norm{\ln{\rho_{t}}}_{\rm op}= \max{\{|\lambda_{1}|,|\lambda_{2}|\}},
 \end{equation}
where $\delta=\gamma ^2+\gamma_{0}^2(1-2 e^{\gamma  t}+ e^{2 \gamma  t})+2 \gamma  \gamma_{0} (1- e^{\gamma  t} )\cos \theta$. $\lambda_{1}$ and $\lambda_{2}$ are eigenvalues of $\ln{\rho_{t}}$.
\begin{figure}[htp]
    \centering
    \includegraphics[width=8.5cm]{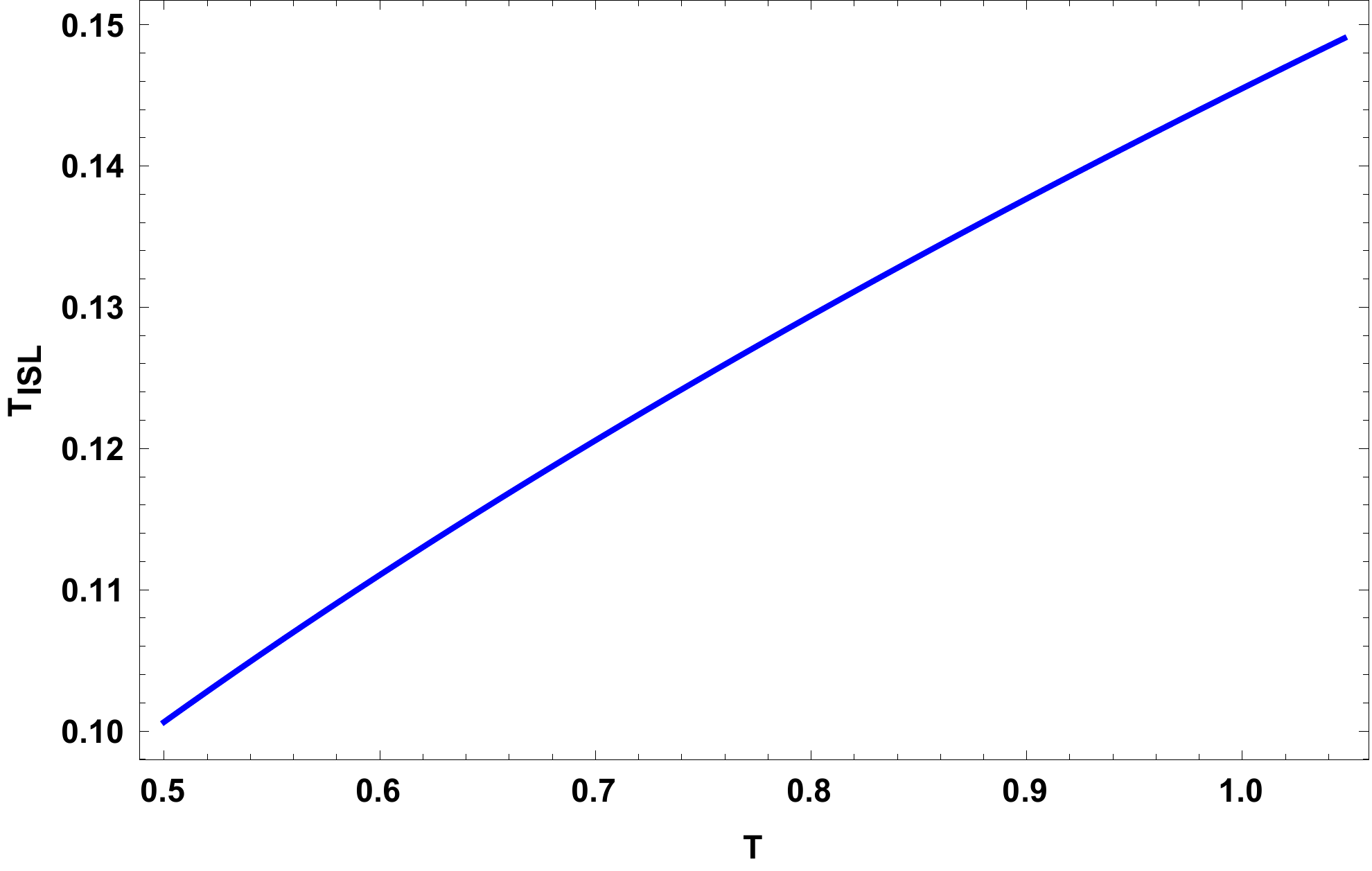}
    \caption{Here we depict $T_{ISL}$ vs $T$ and we have considered $\gamma_{0}=1$, $N=100 $, $\gamma=201$, and $\theta= \frac{\pi}{3}$. The process of erasure takes finite time and the example considered here does not saturate the bound. }
    \label{fig:TISL}
\end{figure}

In Fig.~\ref{fig:TISL}, we plot $T_{ISL}$ vs $T$ $\in$ $[0.5,\frac{\pi}{3}]$ for thermalization process and we have considered $\gamma_{0}=1$, $N=100 $, $\gamma=201$ and $\theta= \frac{\pi}{3}$. Inspecting eq.~\eqref{rho:ther} we see that the state $\rho_t$ requires an infinite amount of time to thermalize. That is, the thermalization process takes an infinite amount of time to erase information from the quantum system. Here in Fig~\ref{fig:TISL} we plot the bound~\eqref{eq:info-t} for finite time duration and the bound~\eqref{eq:info-t} is not tight. In Appendix~\ref{app:sat}, we give a proof that for open system dynamics, in general the bound cannot be reached.

With the recent advances in quantum technology, it may be possible to test the speed limit for the erasure operation. This may play an important role in intermediate-scale-noisy quantum computer where we do not have many physical qubits and we would like to reset our memory much faster in order to reuse them for another task.

\subsection{Quantum speed limit for coherence}\label{sec:coherence-change}
 We now derive the third main result of this work that provides a lower bound on evolution time of basis-dependent coherence $T\geq T_{CSL}$, where the reference basis $\{\ket{i}\}_{i=0}^{d-1}$ is fixed, independent of time.

\begin{theorem}\label{thm:3}
  For an arbitrary quantum dynamics of a finite-dimensional quantum system describable as time-evolution of its state, the minimum time needed for the state $\rho_{t}$ to attain coherence $C(\rho_{T})$, where $\rho_{T}:=\rho_{t=T}$, starting with the initial coherence $C(\rho_{0})$, where $\rho_{0}:=\rho_{t=0}$, is lower bounded by
\begin{equation}\label{csl}
 T \geq T_{CSL} = \frac{\abs{C(\rho_T) - C(\rho_0)}}{ {{\Lambda}}^{\rm rms,D}_{T}\overline{\norm{\ln \rho^{\rm D}_t}^{2}_{\rm HS}} +  {\Lambda}^{\rm rms}_{T}\overline{\norm{{\ln \rho_t}}^{2}_{\rm HS}}  },
\end{equation}
where ${\Lambda}^{\rm rms, \rm D}_T:=\sqrt{\frac{1}{T}\int_{0}^{T}  \norm{\mathcal{L}_{t}({\rho^{\rm D}_t})}^{2}_{\rm HS}{\rm d}t}$ is the root mean square evolution speed of the classical part of quantum system, ${\Lambda}^{\rm rms}_T:=\sqrt{\frac{1}{T}\int_{0}^{T}  \norm{\mathcal{L}_{t}({\rho_t})}^{2}_{\rm HS}{\rm d}t}$ is the root mean square evolution speed of the quantum system, $\overline{\norm{\ln \rho_t^{\rm D}}^{2}_{\rm HS}}:=\sqrt{\frac{1}{T}\int_{0}^{T}  \norm{{\ln \rho_t^{\rm D}}}^{2}_{\rm HS} {\rm d}t}$, $\overline{\norm{{\ln \rho_t}}^{2}_{\rm HS}}:=\sqrt{\frac{1}{T}\int_{0}^{T}  \norm{{\ln \rho_t}}^{2}_{\rm HS}{\rm d} t}$, and $\cal{L}_{t}$ is the Liouvillian super-operator.
\end{theorem}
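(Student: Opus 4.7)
The plan is to imitate the argument of Theorem~\ref{thm:1} but applied to the relative entropy of coherence $C(\rho_t) = S(\rho_t^{\rm D}) - S(\rho_t)$, which is a difference of two von Neumann entropies whose rates of change are each controlled by Lemma~\ref{lem:ent-change}. Since the reference basis $\{\ket{i}\}$ is fixed in time, the dephased state $\rho_t^{\rm D}=\sum_{i}\bra{i}\rho_t\ket{i}\op{i}$ has time derivative $\dot{\rho_t^{\rm D}}=\sum_{i}\bra{i}\mathcal{L}_t(\rho_t)\ket{i}\op{i}$, which is exactly the dephasing of $\mathcal{L}_t(\rho_t)$; this object is what plays the role of the driving term for the classical part in the rms-speed factor $\Lambda^{\rm rms,D}_T$.

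First I would differentiate $C(\rho_t)$ in $t$ and invoke Lemma~\ref{lem:ent-change} for each entropy separately, giving
\begin{equation}
\frac{d}{dt}C(\rho_t) = -\tr\{\dot{\rho_t^{\rm D}} \ln\rho_t^{\rm D}\} + \tr\{\mathcal{L}_t(\rho_t)\ln\rho_t\}.
\end{equation}
Taking absolute values and applying the triangle inequality splits this into two independent rate contributions, one purely classical (diagonal) and one full quantum. Next I would apply the Cauchy--Schwarz trace inequality $|\tr(AB)|\leq \norm{A}_{\rm HS}\norm{B}_{\rm HS}$ to each summand, yielding the pointwise bound
\begin{equation}
\left|\frac{d}{dt}C(\rho_t)\right| \leq \norm{\dot{\rho_t^{\rm D}}}_{\rm HS}\norm{\ln\rho_t^{\rm D}}_{\rm HS} + \norm{\mathcal{L}_t(\rho_t)}_{\rm HS}\norm{\ln\rho_t}_{\rm HS}.
\end{equation}

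I would then integrate from $0$ to $T$, noting that the left-hand side majorizes $|C(\rho_T)-C(\rho_0)|$, and apply the integral Cauchy--Schwarz inequality to each of the two products on the right. This step produces exactly the four time-averaged quantities $\Lambda^{\rm rms,D}_T$, $\overline{\norm{\ln\rho_t^{\rm D}}^2_{\rm HS}}$, $\Lambda^{\rm rms}_T$, and $\overline{\norm{\ln\rho_t}^2_{\rm HS}}$ appearing in \eqref{csl}, after which dividing by the resulting denominator yields the desired bound on $T$.

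The main delicate point will be reconciling $\norm{\dot{\rho_t^{\rm D}}}_{\rm HS}$ with the quantity $\norm{\mathcal{L}_t(\rho_t^{\rm D})}_{\rm HS}$ that enters the definition of $\Lambda^{\rm rms,D}_T$: strictly speaking the former equals the Hilbert--Schmidt norm of the dephasing of $\mathcal{L}_t(\rho_t)$, not of $\mathcal{L}_t$ applied to the dephased state. One resolution is to read the authors' notation as standing for the diagonal part of $\mathcal{L}_t(\rho_t)$; another is to invoke contractivity of the dephasing map in Hilbert--Schmidt norm, $\norm{(\mathcal{L}_t(\rho_t))^{\rm D}}_{\rm HS}\leq \norm{\mathcal{L}_t(\rho_t)}_{\rm HS}$, and then separately bound $\norm{\mathcal{L}_t(\rho_t^{\rm D})}_{\rm HS}$ if that is what is meant. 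Past this notational subtlety, the remainder is a mechanical repetition of the Cauchy--Schwarz pattern used in Theorem~\ref{thm:1}.
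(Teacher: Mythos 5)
Your proposal is correct and follows essentially the same route as the paper: differentiate $C(\rho_t)=S(\rho_t^{\rm D})-S(\rho_t)$ term by term via the entropy-rate lemma, apply the triangle inequality, then the trace and integral Cauchy--Schwarz inequalities to each summand separately. The notational subtlety you flag is resolved in the paper's Appendix~\ref{app:co-change} exactly along the lines of your first suggested reading: there the authors establish $\mathcal{L}_{t}(\rho^{\rm D}_t) = \diff{\rho^{\rm D}_t}{t} = \sum_{i}\bra{i}\mathcal{L}_t(\rho_t)\ket{i}\op{i}$, i.e.\ the symbol $\mathcal{L}_t(\rho_t^{\rm D})$ is by definition the dephasing of $\mathcal{L}_t(\rho_t)$, so no contractivity argument is needed.
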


\begin{proof}

The relative entropy of coherence of the time evolved quantum state $\rho_{t}$ is given as
\begin{equation}
     C(\rho_t)= S(\rho^{\rm D}_{t})-S(\rho_{t}).
\end{equation}
After differentiating above equation with respect to
time $t$ (i.e., employing [Lemma~\ref{lem:co-change}, Appendix~\ref{app:co-change}]), we obtain 
\begin{equation}
   \frac{{\rm d} }{{\rm d}t}C(\rho_t)= -{\rm tr}\{\mathcal{L}_{t}({\rho^{\rm D}_t}) \ln \rho^{\rm D}_t\}+{\rm tr}\{\mathcal{L}_{t}({\rho_t}) \ln \rho_t\}.
\end{equation}
Taking the absolute value of the terms in the above equation and applying the triangular inequality, we get
\begin{equation}
   \left|\frac{{\rm d}}{{\rm d}t} C(\rho_t)\right|\leq |{\rm tr}\{\mathcal{L}_{t}({\rho^{\rm D}_t}) \ln \rho^{\rm D}_t)\}|+|{\rm tr}\{\mathcal{L}_{t}({\rho_t}) \ln \rho_t\}|. 
\end{equation}
Then applying the Cauchy--Schwarz inequality, we obtain the following inequality
\begin{equation}
   \left|\frac{{\rm d}}{{\rm d}t} C(\rho_t)\right|\leq \norm{\mathcal{L}_{t}({\rho^{\rm D}_t})}_{\rm HS} \norm{\ln \rho^{\rm D}_t}_{\rm HS} +\norm{\mathcal{L}_{t}({\rho_t})}_{\rm HS} \norm{\ln \rho_t}_{\rm HS}.
\end{equation}
The above inequality represents the upper bound on that the rate of change of the basis-dependent coherence of the quantum system for arbitrary quantum dynamics. After integrating above equation with respect to time $t$, we obtain 
\begin{align}\label{MCR}
  \int_{0}^{T} {\rm d}t\left|\frac{{\rm d} }{{\rm d}t}C(\rho_t)\right| &\leq  \int_{0}^{T}  \norm{\mathcal{L}_{t}({\rho^{\rm D}_t})}_{\rm HS} \norm{\ln \rho^{\rm D}_t}_{\rm HS}  {\rm d}t\nonumber \\
  &+\int_{0}^{T}  \norm{\mathcal{L}_{t}({\rho_t})}_{\rm HS} \norm{{\ln \rho_t}}_{\rm HS} {\rm d}t.
\end{align}
Let us apply the Cauchy--Schwarz inequality on the right hand side of the above inequality, we get 

\begin{align}
  \int_{0}^{T}{\rm d}t\left|\frac{{\rm d}}{{\rm d}t} C(\rho_t)\right| \leq& \sqrt{\int_{0}^{T}  \norm{\mathcal{L}_{t}({\rho^{\rm D}_t})}^{2}_{\rm HS}{\rm d}t} \sqrt{\int_{0}^{T}  \norm{\ln \rho^{\rm D}_t}^{2}_{\rm HS} {\rm d}t}\nonumber\\
  &+\sqrt{\int_{0}^{T}  \norm{\mathcal{L}_{t}({\rho_t})}^{2}_{\rm HS}{\rm d}t} \sqrt{\int_{0}^{T}  \norm{{\ln \rho_t}}^{2}_{\rm HS} {\rm d}t}.
\end{align}
The above inequality can be written as
\begin{equation}\label{ch}
  T\geq  \frac{|C(\rho_T) - C(\rho_0)|}{ {{\Lambda}}^{\rm rms, \rm D}_{T}\overline{\norm{\ln \rho^{\rm D}_t}^{2}_{\rm HS}} +  {\Lambda}^{\rm rms}_{T}\overline{\norm{{\ln \rho_t}}^{2}_{\rm HS}}  }.
\end{equation}
\end{proof}
The quantum speed limit on coherence applies to both coherence generation and coherence degradation processes. In particular, our bound in \eqref{csl} can answer how fast a system undergoes decoherence. For any completely dephasing process, the above bound~\eqref{csl} is interpreted as the speed limit of decoherence, which means the minimum time required for a coherent state to become an incoherent state. If during the evolution diagonal part of $\rho_t$ in the reference basis is static then ${\Lambda}^{\rm rms, \rm D}_T$ becomes zero.

 The change in coherence is zero when either the dynamics of the quantum system are coherence-preserving or the initial state is a fixed point of the dynamics. The minimal time $T_{CSL}$ is zero when the change in coherence is zero even if there is a change in the state during the dynamics.

Next, we apply our bound on the speed of coherence in a couple of quantum dynamics of interest, namely the pure dephasing process and dissipative process (e.g.,~\cite{lidar2019}).

{\it Pure dephasing process.}--- Let us consider a two-level atom with the ground state $\op{1}$  and the excited state $\op{0}$ interacting with a dephasing environment. The corresponding dephasing jump operator is given by $L = \sqrt{\frac{\gamma}{2}} \sigma_{z}$, where  $\sigma_{z}$ is the Pauli-$Z$ operator and $\gamma$ is a real parameter denoting the strength of dephasing. The Lindblad master equation~\cite{Lindbland1976} governs the time evolution of atom, and it is given by
\begin{equation}
\frac{{\rm d} \rho_t}{{\rm d} t} = \cal{L}(\rho_t)= \frac{\gamma}{2}(\sigma_{z}\rho_t\sigma_{z}-\rho_t).
\end{equation}
If the atom the initially in a state  $\rho_{0}=\op{\psi(0)}$, where $|\psi(0)\rangle = {\cos\frac{\theta}{2}}|0\rangle +  \sin{\frac{\theta}{2}}|1\rangle$, then solution of the Lindbland equation is given by
\begin{align}\label{rho:dep}
\rho_t=& \cos^2 {\frac{\theta}{2}}\op{0} + \sin^2 {\frac{\theta}{2}}\op{1} \nonumber\\
&+ e^{-\gamma t} \sin {\frac{\theta}{2}} \cos {\frac{\theta}{2}} (|1\rangle\! \langle 0|+ |0\rangle \!\langle 1|).
\end{align}
To estimate bound \eqref{csl}, we require the following quantities:
\begin{align}
&|C(\rho_{t})-C(\rho_{0})|=\frac{1}{2} [\ln \left(e^{-\gamma t} \sqrt{\sin ^2\theta+ e^{2 \gamma  t}\cos ^2\theta }+1\right)\nonumber\\
&\qquad +\ln \left(\frac{1}{4}-\frac{1}{4} e^{-\gamma t} \sqrt{\sin ^2\theta + e^{2 \gamma  t}\cos ^2\theta }\right) \nonumber\\
&\qquad +2 e^{-\gamma t} \sqrt{\sin ^2\theta + e^{2 \gamma  t}\cos ^2\theta }\times\nonumber\\
&\qquad \coth ^{-1}\left(\frac{e^{\gamma  t}}{\sqrt{\sin ^2\theta +e^{2 \gamma  t}\cos ^2\theta  }}\right)],
\end{align}
\begin{align}
\norm{\cal{L}(\rho_{t})}_{HS}^{2}=\frac{1}{2} \gamma ^2 e^{-2 \gamma  t}\sin ^2\theta,
\end{align}
\begin{align}
\norm{\cal{L}(\rho_{t}^{\rm D})}_{HS}^{2}=0, 
\end{align}
\begin{align}
\norm{\ln{\rho^{\rm D}_{t}}}_{HS}^{2}=\left[\ln \left(\sin ^2\frac{\theta }{2}\right)\right]^2+\left[\ln \left(\cos ^2\frac{\theta }{2}\right)\right]^2,
\end{align}
\begin{align}
&\norm{\ln{\rho_{t}}}_{HS}^{2}=\nonumber\\
&\qquad\left[\ln \frac{ \left(2 + \sqrt{2} e^{-\gamma  t} \sqrt{ \left(e^{2 \gamma  t}-1\right)\cos 2\theta + e^{2 \gamma  t}+1}\right)}{4}\right]^2\nonumber\\
&+ \left[\ln \frac{ \left(2 - \sqrt{2} e^{-\gamma  t} \sqrt{ \left(e^{2 \gamma  t}-1\right)\cos 2\theta +e^{2 \gamma  t}+1}\right)}{4}\right]^2.
\end{align}
Here, we have calculated coherence in the computational basis $\{\ket{0},\ket{1}\}$. The diagonal component of $\rho_t$ in the reference basis is static during the pure dephasing process, therefore ${\Lambda}^{\rm rms, \rm D}_T$ is zero. In Fig.~\ref{fig:TCS1}, we plot $T_{CSL}$ vs $T$ $\in$ $[0, \frac{\pi}{3}]$ for pure dephasing dynamics and we have considered $\gamma=2$ and $\theta$ $\in$ $\{\frac{\pi}{2},\frac{\pi}{3},\frac{\pi}{4}\}$.  Note that, for the pure dephasing process, the speed limit of basis-dependent coherence can be interpreted as the speed limit of decoherence.

 In Fig.~\ref{fig:TCS1}, we can see that in the dephasing process, the maximally coherent state has a higher speed limit time of decoherence compared to other states. According to Eq. \eqref{rho:dep} the dephasing time for the state $\rho_t$ is infinite. That is, the dephasing process takes an infinite amount of time to erase coherence from the quantum system. In Fig.~\ref{fig:TCS1}, we plot the bound~\eqref{csl} for finite time duration and the bound~\eqref{csl} is not tight in general (see Appendix~\ref{app:sat}).

\begin{figure}[htp]
    \centering
    \includegraphics[width=8.5cm]{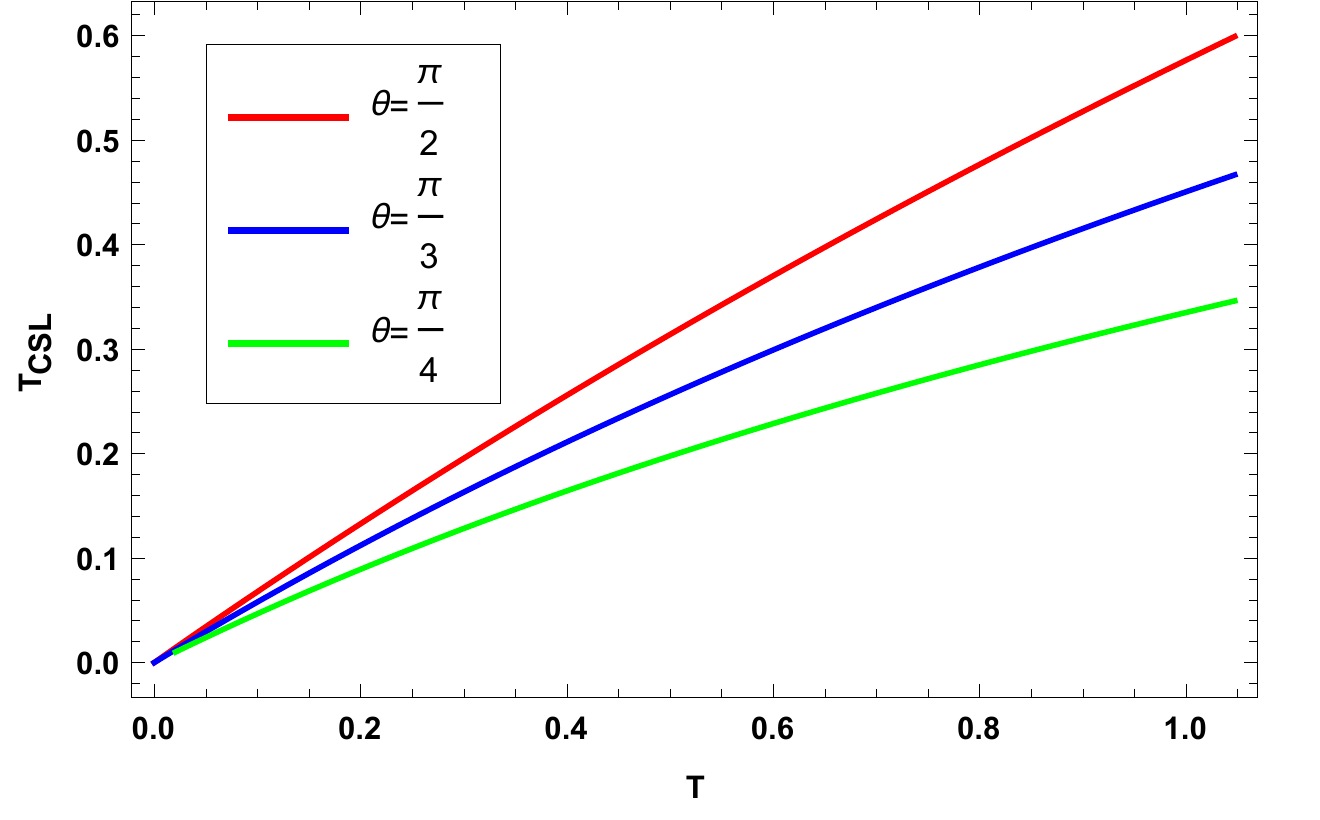}
    \caption{Here we depict $T_{CSL}$ vs $T$ and we have considered $\gamma=2$, and $\theta$ $\in$ $\{\frac{\pi}{2},\frac{\pi}{3},\frac{\pi}{4}\}$.  }
    \label{fig:TCS1}
\end{figure}

{\it Dissipative process}.--- Now, we illustrate the quantum speed limit for the coherence under dissipative process. Let us consider a two-level atom with ground state as $\op{1}$  and excited state as $\op{0}$ interacting with a dissipative environment. The corresponding jump operator is given by $L_{-} = \sqrt{\frac{\gamma}{2}} \sigma_{-}$, where  $\sigma_{-}$ is the lowering operator and $\gamma$ denotes the dissipation rate. The Lindblad master equation~\cite{Lindbland1976} governs the time evolution of atom and it is given by
\begin{equation}
\frac{ {\rm d}\rho_t}{{\rm d} t} = \frac{\gamma}{2}(\sigma_{-}\rho_t\sigma_{+}-\frac{1}{2}\{\sigma_{+}\sigma_{-},\rho_t\}),
\end{equation}
where $\sigma_{+}=|0\rangle\langle 1|$ and $\sigma_{-}=|1\rangle\langle 0|$
are the raising and the lowering operators, respectively.

If the atom the initially in a state  $\rho_{0}=\op{\psi(0)}$, where $|\psi(0)\rangle = {\cos\frac{\theta}{2}}|0\rangle +  \sin{\frac{\theta}{2}}|1\rangle$, then solution of the Lindbland equation is given by
\begin{align}\label{rho:dis}
\rho_t= &\quad e^{-\frac{\gamma t}{2} }\cos^2 {\frac{\theta}{2}}\op{0} + (1-e^{-\frac{\gamma t}{2} }\cos^2 {\frac{\theta}{2}})\op{1}\nonumber \\
&+ e^{-\frac{\gamma t}{4}} \sin {\frac{\theta}{2}} \cos {\frac{\theta}{2}} (|1\rangle\! \langle 0|+ |0\rangle\! \langle 1|).
\end{align}
To estimate bound \eqref{csl}, we require the following quantities:
\begin{align}
 C(\rho_{0})=-\sin ^2\frac{\theta }{2} \ln \left(\sin ^2\frac{\theta }{2}\right)-\cos ^2\frac{\theta }{2} \ln  \left(\cos ^2\frac{\theta }{2}\right),
 \end{align}
\begin{align}
 & C(\rho_{t})=- e^{-\frac{\gamma t}{2}}\cos ^2\frac{\theta }{2} \ln \left( e^{-\frac{\gamma  t}{2}}\cos ^2\frac{\theta }{2}\right)\nonumber\\
 &-\left(1-e^{-\frac{\gamma  t}{2}}\cos ^2\frac{\theta }{2} \right) \ln \left(1-e^{-\frac{\gamma  t}{2}}\cos ^2\frac{\theta }{2} \right)\nonumber\\
 &+\frac{1}{2} \left(\ln \left(\frac{1}{16} \left(\sqrt{2\alpha} e^{-\frac{\gamma  t}{2} }+2\right)\right)+\ln \left(2-\sqrt{2\alpha}  e^{-\frac{\gamma  t}{2}}\right)\right)\nonumber\\
 &+ \sqrt{\frac{{\alpha}}{{2}}}  e^{-\frac{\gamma  t}{2} } \tanh ^{-1}\left(e^{-\frac{\gamma  t}{2} } \sqrt{e^{\gamma  t}-4\left(e^{\frac{\gamma  t}{2}}-1\right) \cos ^4\frac{\theta }{2} }\right),
 \end{align}
 \begin{align}
\norm{\cal{L}(\rho_{t})}_{HS}^{2}= \frac{1}{32} \gamma ^2 e^{-\gamma t } \left(16 \cos ^4\frac{\theta }{2}+e^{\frac{\gamma  t}{2}}\sin ^2\theta  \right),
\end{align}
\begin{align}
 \norm{\cal{L}(\rho_{t}^{\rm D})}_{HS}^{2}= \frac{1}{2} \gamma ^2 e^{-\gamma t}\cos ^4\frac{\theta }{2},
 \end{align}
\begin{align}
&\norm{\ln{\rho^{\rm D}_{t}}}_{HS}^{2}=\left[\ln\left(e^{-\frac{\gamma  t}{2}}\cos ^2\frac{\theta }{2} \right)\right]^2\nonumber\\
&+\left[\ln\left(1- e^{-\frac{\gamma  t}{2} }\cos ^2\frac{\theta }{2}\right)\right]^2,
\end{align}
\begin{align}
&\norm{\ln{\rho_{t}}}_{HS}^{2}=\left[\ln\left(\frac{1}{4} \left(2-\sqrt{2} e^{-\frac{\gamma  t}{2} }\sqrt{\beta}\right)\right)\right]^2\nonumber\\
&\qquad+ \left[\ln\left(\frac{1}{4} \left(2+\sqrt{2} e^{-\frac{\gamma  t}{2} } \sqrt{\beta}\right)\right)\right]^2,
\end{align}
where $\alpha=3+4 \cos \theta +\cos 2 \theta -8 e^{\frac{\gamma  t}{2}}\cos ^4\frac{\theta }{2}+2 e^{\gamma  t}$ and $\beta=3-4 \left(e^{\frac{\gamma  t}{2}}-1\right)\cos\theta  -\left(e^{\frac{\gamma  t}{2}}-1\right)\cos 2\theta -3 e^{\frac{\gamma  t}{2}}+2 e^{\gamma t}$. \begin{figure}[htp]
    \centering
    \includegraphics[width=8.5cm]{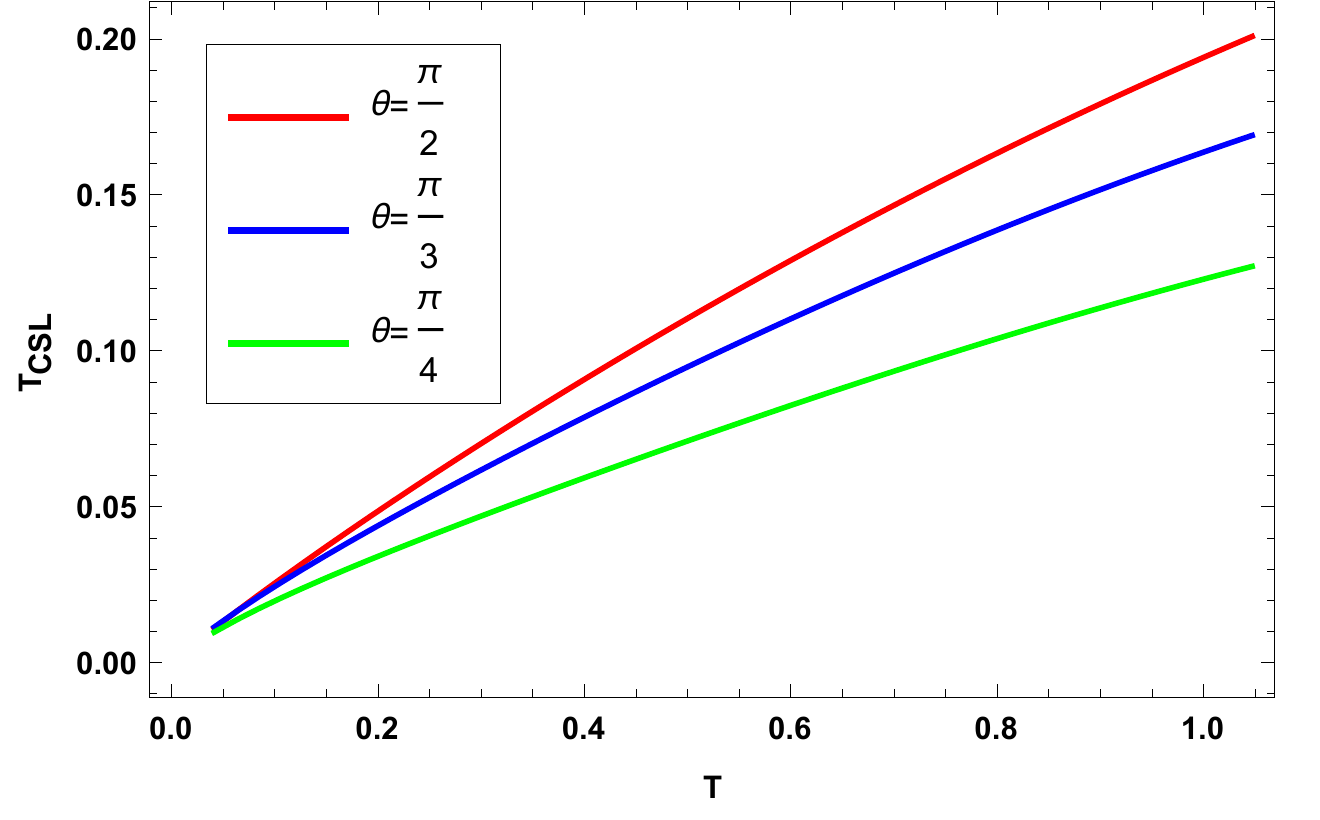}
    \caption{Here we depict $T_{CSL}$ vs $T$ and we have considered $\gamma=2$, and $\theta$ $\in$ $\{\frac{\pi}{2},\frac{\pi}{3},\frac{\pi}{4}\} $.  }
    \label{fig:TCSL2}
\end{figure}
Here, we have calculated coherence in the computational basis $\{\ket{0},\ket{1}\}$. In Fig.~\ref{fig:TCSL2}, we plot $T_{CSL}$ vs $T$ $\in$ $[0, \frac{ \pi}{3}]$ for dissipative dynamics and we have considered $\gamma=2$ and $\theta$ $\in$ $\{\frac{\pi}{2},\frac{\pi}{3},\frac{\pi}{4}\}$. 
We can observe that a maximally coherent state has a higher speed limit time of decoherence compared to other states for the dissipative process. According to Eq. \eqref{rho:dis} the dissipation time for the state $\rho_t$ is infinite. That is, the dissipative process takes an infinite amount of time to erase coherence from the quantum system. In Fig.~\ref{fig:TCS1}, we plot the bound~\eqref{csl} for finite time duration. The plot shows that the bound~\eqref{csl} is not tight as expected (see Appendix~\ref{app:sat}).

\subsection{ Bounds based on instantaneous speed}\label{AQSLI}

The method used to derive the previous bounds for quantum speed limit for information and coherence is often used in the QSL literature~\cite{Deffner2013,Cai2017}.  It may be noted that these methods can have some limitations when the evolution speed is not constant. In that case, then the notion of time averaged speed is introduced. This issue was addressed in Ref.~\cite{Connor2021} by deriving so called the action quantum speed limits. The concept of action quantum speed limit incorporates the geodesic path between the initial and final state and the notion of instantaneous evolution speed is used instead of the average evolution speed of the quantum system ~\cite{Connor2021}. Here, we opt method of speed limit of action to derive the speed limit for the von Neumann entropy and maximal information. The quantum speed limit based on instantaneous evolution speed for the entropy read as
\begin{align}
 T \geq \frac{|S(\rho_T) - S(\rho_0)|^2}{  \int_{0}^{T} (\norm{ \cal{L}_{t}({\rho_t)}}_{\rm HS} \norm{{\ln \rho_t}}_{\rm HS})^{2}{\rm d}t}.
\end{align}

The quantum speed limit based on instantaneous evolution speed for information reads as

\begin{align}
 T \geq \frac{|I(\rho_T) - I(\rho_0)|^2}{  \int_{0}^{T} (\norm{ \cal{L}_{t}({\rho_t)}}_{\rm tr} \norm{{\ln \rho_t}}_{\rm op})^{2}{\rm d}t}.
\end{align}
The detailed proof of these two bounds discussed in the Appendix~\ref{AQSL1}. 

In the quantum speed limits based on instantaneous evolution speed, the notion of time appears naturally and we do not require to introduce the notion of time as ad hoc. This is the key difference between previously obtained speed limits (bounds in ~\eqref{eq:theorem-1} and~\eqref{eq:info-t}) and speed limits based on instantaneous evolution speed. However, the speed limits based on instantaneous evolution speed has its own limitations. For example, it will be hard to achieve the bound or saturate the bound. As one can see, saturation happens when the speed is constant. This condition is very unlikely to hold for open system dynamics where typically the instantaneous speed is time-dependent.

For the sake of completeness, we have also obtained the quantum speed limit bound based on instantaneous evolution speed for coherence in the Appendix~\ref{AQSL2}.

\section{Conclusion}
Understanding of how fast one can create or erase information and coherence is important to control quantum systems for desirable information processing tasks. The limitations on the rate at which information measures change is crucial in engineering of processors to manipulate states of the quantum systems in quantum computers and communication devices. In this work, we have derived the fundamental limits on how fast the entropy, information, and quantum coherence can change for arbitrary physical processes described by completely positive maps. We obtained lower bounds on the minimal evolution time for the change in the entropy, maximal information, and quantum coherence. The quantum speed limit for information also sets a generic bound on the rate of information production. We showcase an application of our main result by answering the question on the minimum time required to erase the information of a given quantum state, where erasure could be under a general resetting operation or via thermalization process.
The quantum speed limit on coherence for dephasing process answers the question how fast system can undergo decoherence. We have illustrated the quantum speed limit for coherence for pure dephasing as well as dissipative processes. 
 
Unlike the quantum speed limit for state evolution under unitary dynamics, an important aspect of the quantum speed limits for information given in ~\eqref{eq:info-t} and~\eqref{csl} is that, in general, they cannot be made tight (see Appendix~\ref{app:sat}). If we demand that the inequality is saturated, then the time evolved density operator may not remain a positive operator for general open system dynamics. This is also evident from the examples of non-unitary quantum dynamics we have considered in this paper. We also provide a condition when the inequality may be saturated. As a future work, it would be interesting to find speed limits on informational measures that are tighter than the current bounds.

All the bounds derived in our paper (Theorems \ref{thm:1}--\ref{thm:3}) make use of the geometric approach,
where the lower bounds to entropy, information and coherence depend on the time averaged notion of
the root mean square evolution speed. One may ask if it possible to obtain quantum speed limits 
where the bounds do not depend on the
average speed. In this respect, the action speed limits have been proved where a different derivation based on
the notion of action allowed to incorporate the notion of instantaneous
evolution speed into the QSL bound. Towards the end of our paper, we have also presented similar bounds for entropy, maximal information, and coherence. In future, we will explore more on the physical applications of the geometric QSL and instantaneous speed based QSL.
The speed limits, presented in this work, may find application in the study of quantum thermodynamics, open quantum systems, quantum control theory, and engineering quantum technologies.

\medskip

\begin{acknowledgments}
The authors thank  Lucas C.~C\'eleri for feedback. BM acknowledges the support of the INFOSYS scholarship. This work is supported by the J. C. Bose Fellowship from the Department of Science and Technology (DST)
India under Grant No.~JCB/2018/000038 (2019–2024). SD acknowledges Individual Fellowships at Universit\'e libre de Bruxelles; this project receives funding from the European Union’s Horizon 2020 research and innovation program under the Marie Sk\l{}odowska-Curie Grant Agreement No.~801505. SD is
thankful to the Harish-Chandra Research Institute, Prayagraj, India for the hospitality during his visit where part of this work was done.
\end{acknowledgments}

\newpage

\appendix
\section{Rate of coherence change}\label{app:co-change}
We consider the density operator $\rho^{{\rm D}}_{t}$ that is incoherent state in the reference basis $\{|i\rangle\}_{i=0}^{d-1}$, which is fixed and independent of time, and formed after the action of completely dephasing channel on the density operator $\rho_{t}$ :
\begin{equation}\label{eq:rho-d}
    \rho_{t}^{\rm D}= \sum_{i}\bra{i}\!\rho_{t}\!\ket{i}\op{i}.
\end{equation}
We now present a Lemma that we use in Section~\ref{sec:coherence-change}.
\begin{lemma}\label{lem:co-change}
 The rate of change of coherence $C(\rho_t):= S(\rho^{\rm D}_{t})-S(\rho_{t})$ is given by
\begin{equation}\label{eq:coherence-rate-change}
   \frac{{\rm d} }{{\rm d}t}C(\rho_t)= -{\rm tr}\{\mathcal{L}_{t}({\rho^{\rm D}_t}) \ln \rho^{\rm D}_t\}+{\rm tr}\{\mathcal{L}_{t}({\rho_t}) \ln \rho_t\}.
\end{equation}
\end{lemma}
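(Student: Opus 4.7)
The plan is to differentiate the definition $C(\rho_t)=S(\rho^{\rm D}_t)-S(\rho_t)$ term by term and reduce each piece to a separate application of Lemma~\ref{lem:ent-change}. The second summand is immediate: Lemma~\ref{lem:ent-change} gives directly $\frac{d}{dt}S(\rho_t)=-\tr\{\mathcal{L}_t(\rho_t)\ln\rho_t\}$, so its contribution to $\frac{d}{dt}C(\rho_t)$ is $+\tr\{\mathcal{L}_t(\rho_t)\ln\rho_t\}$.

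For the first summand the key preliminary observation is that the completely dephasing map $\Delta(\rho):=\sum_i\bra{i}\rho\ket{i}\op{i}$ with respect to the fixed reference basis $\{\ket{i}\}$ is linear, time-independent, and CPTP. Therefore $\rho^{\rm D}_t=\Delta(\rho_t)$ is itself a well-defined, differentiable one-parameter family of density operators whose time derivative is $\dot{\rho}^{\rm D}_t=\Delta(\dot{\rho}_t)=\Delta(\mathcal{L}_t(\rho_t))$. Crucially, the trace-class/uniform-convergence hypotheses of Lemma~\ref{lem:ent-change} pass from $\rho_t$ to $\rho^{\rm D}_t$, because the eigenvalues of $\rho^{\rm D}_t$ are simply the diagonal entries $\bra{i}\rho_t\ket{i}$ in the fixed basis and hence inherit their regularity from $\rho_t$. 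Thus Lemma~\ref{lem:ent-change} applies to $S(\rho^{\rm D}_t)$ and yields $\frac{d}{dt}S(\rho^{\rm D}_t)=-\tr\{\dot{\rho}^{\rm D}_t\ln\rho^{\rm D}_t\}$.

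Finally I would identify the effective rate $\dot{\rho}^{\rm D}_t$ with the object written as $\mathcal{L}_t(\rho^{\rm D}_t)$ in the statement. Since $\ln\rho^{\rm D}_t$ is diagonal in the reference basis, only the diagonal part of any operator contributes to $\tr\{\,\cdot\,\ln\rho^{\rm D}_t\}$, so
\begin{equation*}
\tr\{\dot{\rho}^{\rm D}_t\ln\rho^{\rm D}_t\}=\tr\{\Delta(\mathcal{L}_t(\rho_t))\ln\rho^{\rm D}_t\},
\end{equation*}
which the authors denote $\tr\{\mathcal{L}_t(\rho^{\rm D}_t)\ln\rho^{\rm D}_t\}$. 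Adding the two contributions reproduces the claimed identity~\eqref{eq:coherence-rate-change}.

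The only subtle step — and hence the main obstacle I would have to write out carefully — is justifying that Lemma~\ref{lem:ent-change} is applicable to the dephased family $\rho^{\rm D}_t$: in finite dimensions this is automatic, but one should check (or at least remark) that the regularity assumptions of Lemma~\ref{lem:ent-change} descend through the bounded CP map $\Delta$. Everything else is term-by-term differentiation plus an invocation of Lemma~\ref{lem:ent-change}.
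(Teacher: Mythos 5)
Your proposal is correct and follows essentially the same route as the paper's own proof: both differentiate the two entropy terms separately via Lemma~\ref{lem:ent-change} and use that the time derivative of the dephased state equals the dephasing of $\mathcal{L}_t(\rho_t)$, which the paper then denotes $\mathcal{L}_t(\rho^{\rm D}_t)$. Your explicit remark that the regularity hypotheses descend through the bounded map $\Delta$ is a minor addition the paper omits (it works in finite dimensions here), but the argument is otherwise identical.
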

\begin{proof}
The rate of change of entropy of the completely dephased density operator $\rho^{\rm D}_t$ is given as:
\begin{align}
& \diff{S(\rho^{\rm D}_t)}{t} \nonumber\\
= & -\tr\left[ \diff{\rho^{\rm D}_t}{t}\ln \rho^{\rm D}_t\right]\\
= &- \tr \left[\sum_{i,j}\bra{i}{\mathcal{L}_t(\rho_t)}\ket{i}\op{i} \ln \bra{j}\rho_t\ket{j}\op{j}  \right]\\
= & -\tr \left[\sum_{i,j}\bra{i}{\mathcal{L}_t(\rho_t)}\ket{i} \ln \bra{j}\rho_t\ket{j} \delta_{i,j}\ket{i}\!\bra{j}  \right]\\
= &-\sum_{i}\bra{i}{\mathcal{L}_t(\rho_t)}\ket{i}\ln \bra{i}\rho_t\ket{i}\label{eq:diag-rate} .
\end{align}
From the identity~\eqref{eq:rho-d}, we have,
\begin{align}
     \mathcal{L}_{t}(\rho^{\rm D}_t) = \diff{\rho^{\rm D}_t}{t} = \sum_{i}\bra{i}{\mathcal{L}_t(\rho_t)}\ket{i}\op{i}.\label{eq:l-rho-t}
\end{align}
Invoking the identity~\eqref{eq:l-rho-t} in the identity~\eqref{eq:diag-rate} and using identity of the rate of entropy change in Lemma~\ref{lem:ent-change}, we get the desired identity~\eqref{eq:coherence-rate-change}.
\end{proof}

\section{Proof of Theorem \ref{thm:2}}\label{QSLISL}

The information content $I(\rho_{t})$ of the time evolved density operator $\rho_{t}$ is given by
\begin{equation}
     I(\rho_t)= { \ln} (d)-S(\rho_{t}).
\end{equation}
After differentiating the above equation with respect to time $t$, we obtain 
\begin{equation}
   \frac{ {\rm d}}{{\rm d}t}I(\rho_t)= {\rm tr}\{\cal{L}_{t}({\rho_t})  \ln \rho_t\}.
\end{equation}

Let us take the absolute value of above equation and apply the H\"older's inequality, then we obtain the following inequality
\begin{equation}
   \left|\frac{{\rm d} }{{\rm d}t}I(\rho_t)\right| \leq \norm{\cal{L}_{t}({\rho_t})}_{\rm tr} \norm{{ \ln \rho_t}}_{\rm op}.
\end{equation}

After integrating above equation with respect to time $t$, we obtain 
\begin{equation}
  \int_{0}^{T}{\rm d}t \left|\frac{{\rm d} }{{\rm d}t}I(\rho_t)\right| \leq \int_{0}^{T}  \norm{\cal{L}_{t}({\rho_t})}_{\rm tr} \norm{{ \ln \rho_t}}_{\rm op} {\rm d}t.
\end{equation}
We apply the Cauchy--Schwarz inequality on the right hand side of the above inequality, we get  
\begin{align}
 &\frac{1}{T} \int_{0}^{T}{\rm d}t \left|\frac{{\rm d} I(\rho_t)}{{\rm d}t}\right|\nonumber\\ & \leq  \sqrt{\frac{1}{T}\int_{0}^{T} \norm{ \cal{L}_{t}({\rho_t)}}^{2}_{\rm tr}{\rm d}t} \sqrt{\frac{1}{T}\int_{0}^{T}  \norm{{ \ln \rho_t}}^{2}_{\rm op} {\rm d}t}\label{eqs}.
\end{align}
From the inequality~\eqref{eqs} we can obtain the following bound
\begin{equation}
  T\geq  \frac{|I(\rho_T) - I(\rho_0)|}{ {\Lambda}^{\rm rms}_{T}\overline{\norm{{\ln \rho_t}}^{2}_{\rm op}}}.
\end{equation}

\section{On the saturation of speed limits}\label{app:sat}
For any two normal operators $A$ and $B$, the following equality holds if and only if $A$ and $B$ are linearly dependent, i.e., $A=c B$ where $c\in\mathbbm{C}$:
\begin{equation}
     |{\rm tr}(AB)|=\sqrt{{\rm tr}(A^{\dagger}A){\rm tr}(B^{\dagger}B)}.
\end{equation}
For the saturation of the quantum speed limits for the information which are derived invoking the Cauchy-Schwarz inequality (e.g, \eqref{eq:ent-bound}), a necessary criterion is that the involved density operator $\rho_t$ and its Liouvillian $\mathcal{L}_t(\rho_t)
= {\dot {\rho}_t}$ should be such that the Cauchy-Schwarz inequality itself is saturated.

Note that any density operator $\rho_t$ can be expressed as 
$$\rho_t=\cfrac{\exp{(-K(t))}}{\tr{[\exp{(-K(t))}}]}$$ for some Hermitian operator $K(t)$, which gives $- \ln \rho_t=K(t)+\tr{[\exp{(-K(t))}]}\mathbbm{1}$. For the Cauchy-Schwarz inequality to be saturated in \eqref{eq:ent-bound}, we need to have for some $a\in\mathbbm{R}$
\begin{equation}
\mathcal{L}_t(\rho_t)= \dot{\rho_t}=a (K(t)+\tr{[\exp{(-K(t))}]}\mathbbm{1}),
\end{equation}
which implies 
\begin{equation}
    \rho_T=\rho_0+a\int_0^T(K(t)+\tr{[\exp{(-K(t))}]}\mathbbm{1})\mathrm{d}t.
\end{equation}
The above identity in general is not true as adding a Hermitian operator to a positive operator may not keep the positivity. 
In other words, if we demand that the inequality is saturated, then the time evolved density operator may not remain a positive operator for general open system dynamics. Thus, the equality condition for quantum speed limits for information cannot be reached in general.
Though, this may hold only in special cases as the second term on the right hand side has to be a positive operator and needs to be traceless for the identity to hold. Our proof also gives a condition when the equality may hold. For example, if the open system dynamics not only keeps $\rho_t$ positive, but also keeps $-\ln \rho_t$ positive through out the evolution, then equality may be achieved. 
This is a highly non-trivial task and we leave this as a future problem.

\section{Derivation of quantum speed limit based on instantaneous evolution speed for information}\label{AQSL1}

The  Cauchy--Schwarz inequality for any two continuous functions states that
\begin{equation*}
    \int_{0}^{T}f(t) ~~ g(t) {\rm d}t \leq \sqrt{\int_{0}^{T} f(t)^{2}{\rm d}t} \sqrt{\int_{0}^{T} g(t)^{2}{\rm d}t},
\end{equation*}
where $f(t)$ and $g(t)$ are real functions that are continuous on the closed interval $[0,T]$.

We apply the Cauchy--Schwarz inequality on the right hand side of the inequality \eqref{MER}, by setting $g = 1$, we obtain  
\begin{align}
 & \int_{0}^{T}{\rm d}t \left|\frac{{\rm d} S(\rho_t)}{{\rm d}t}\right|\nonumber\\ & \leq  \sqrt{\int_{0}^{T} (\norm{ \cal{L}_{t}({\rho_t)}}_{\rm HS} \norm{{\ln \rho_t}}_{\rm HS})^{2}{\rm d}t} \sqrt{\int_{0}^{T}  {\rm d}t}\label{eqs-2}.
\end{align}

After integrating above equation with respect to time $t$, we obtain 
\begin{align}
 & |S(\rho_T) - S(\rho_0)| \leq \sqrt{T} \sqrt{\int_{0}^{T} (\norm{ \cal{L}_{t}({\rho_t)}}_{\rm HS} \norm{{\ln \rho_t}}_{\rm HS})^{2}{\rm d}t} \label{eqs-3}.
\end{align}

The above inequality can be re-written in the following form
\begin{align}
 T \geq \frac{|S(\rho_T) - S(\rho_0)|^2}{  \int_{0}^{T} (\norm{ \cal{L}_{t}({\rho_t)}}_{\rm HS} \norm{{\ln \rho_t}}_{\rm HS})^{2}{\rm d}t}.
\end{align}

Similarly, for maximal information, one can obtain the following bound
\begin{align}
 T \geq \frac{|I(\rho_T) - I(\rho_0)|^2}{  \int_{0}^{T} (\norm{ \cal{L}_{t}({\rho_t)}}_{\rm tr} \norm{{\ln \rho_t}}_{\rm op})^{2}{\rm d}t}.
\end{align}

Hence, the proofs.

\section{Derivation of quantum speed limit based on instantaneous evolution speed for coherence}\label{AQSL2}

Let us apply the Cauchy--Schwarz inequality on the right hand side of the inequality \eqref{MCR}, by setting $g(t) = 1$, we obtain  
\begin{align}
  \int_{0}^{T}{\rm d}t\left|\frac{{\rm d}}{{\rm d}t} C(\rho_t)\right| \leq& \sqrt{\int_{0}^{T}  \norm{\mathcal{L}_{t}({\rho^{\rm D}_t})}^{2}_{\rm HS}\norm{\ln \rho^{\rm D}_t}^{2}_{\rm HS}{\rm d}t} \sqrt{\int_{0}^{T}   {\rm d}t}\nonumber\\
  &+\sqrt{\int_{0}^{T}  \norm{\mathcal{L}_{t}({\rho_t})}^{2}_{\rm HS}\norm{{\ln \rho_t}}^{2}_{\rm HS}{\rm d}t} \sqrt{\int_{0}^{T}   {\rm d}t}.
\end{align}

After integrating above equation with respect to time $t$, we obtain 
\begin{align}
  | C(\rho_T)-C(\rho_0)| \leq&\sqrt{T} [\sqrt{\int_{0}^{T}  \norm{\mathcal{L}_{t}({\rho^{\rm D}_t})}^{2}_{\rm HS}\norm{\ln \rho^{\rm D}_t}^{2}_{\rm HS}{\rm d}t} \nonumber\\
  &+\sqrt{\int_{0}^{T}  \norm{\mathcal{L}_{t}({\rho_t})}^{2}_{\rm HS}\norm{{\ln \rho_t}}^{2}_{\rm HS}{\rm d}t}].
\end{align}

The above inequality can be re-written in the following form
\begin{equation}
  T\geq  \frac{|C(\rho_T) - C(\rho_0)|^2}{ [\sqrt{\Lambda_{T}^{D}}
  +\sqrt{\Lambda_{T}}]^2 },
\end{equation}

where $\Lambda_{T}^{D}=\int_{0}^{T}  \norm{\mathcal{L}_{t}({\rho^{\rm D}_t})}^{2}_{\rm HS}\norm{\ln \rho^{\rm D}_t}^{2}_{\rm HS}{\rm d}t$ and  $\Lambda_{T}=\int_{0}^{T}  \norm{\mathcal{L}_{t}({\rho_t})}^{2}_{\rm HS}\norm{{\ln \rho_t}}^{2}_{\rm HS}{\rm d}t$.

\bibliography{main}

 \end{document}